\algnewcommand{\LineComment}[1]{\State \(\triangleright\) \textit{#1}}
\algnewcommand{\Annotation}[1]{\State \textcolor{blue}{#1}}
\newtheorem{theorem}{Theorem}
\newtheorem{lemma}[theorem]{Lemma}
\newcommand{\etal}{\emph{et al.}\xspace}
\newcommand{\eps}{\epsilon}
\newcommand{\R}{\mathbb{R}}
\renewcommand{\S}{\mathcal{S}}
\newcommand{\ones}{\mathbf{1}}
\newcommand{\x}{\vec{x}}
\newcommand{\y}{\vec{y}}
\newcommand{\z}{\vec{z}}
\newcommand{\g}{\vec{g}}
\newcommand{\p}{\vec{p}}
\newcommand{\xtp}{\vec{x}^{(t+1)}}
\newcommand{\ytp}{\vec{y}^{(t+1)}}
\newcommand{\xt}{\vec{x}^{(t)}}
\newcommand{\yt}{\vec{y}^{(t)}}
\newcommand{\pt}{\vec{p}^{(t)}}
\newcommand{\ptp}{\vec{p}^{(t+1)}}
\newcommand{\xz}{\vec{x}^{(0)}}
\newcommand{\yz}{\vec{y}^{(0)}}
\newcommand{\pz}{\vec{p}^{(0)}}
\newcommand{\xone}{\vec{x}^{(1)}}
\newcommand{\yone}{\vec{y}^{(1)}}
\newcommand{\pone}{\vec{p}^{(1)}}
\newcommand{\one}[1]{\vec{1}_{#1}}
\newcommand{\norm}[1]{\left\|#1\right\|}
\newcommand{\xtdot}{\dot{\vec{x}}^{(t)}}
\newcommand{\ytdot}{\dot{\vec{y}}^{(t)}}
\newcommand{\ptdot}{\dot{\vec{p}}^{(t)}}
\newcommand{\xopt}{\vec{x}^*}
\begin{document}
\title{A Parallel Double Greedy Algorithm for Submodular Maximization}
\author{
Alina Ene\thanks{Department of Computer Science, Boston University, {\tt aene@bu.edu}.}
\and
Huy L. Nguy\~{\^{e}}n\thanks{College of Computer and Information Science, Northeastern University, {\tt hlnguyen@cs.princeton.edu}.} 
\and
Adrian Vladu\thanks{Department of Computer Science, Boston University, {\tt avladu@bu.edu}.}
}
\date{}
\maketitle

\begin{abstract}
We study parallel algorithms for the problem of maximizing a non-negative submodular function. Our main result is an algorithm that achieves a nearly-optimal $1/2 -\epsilon$ approximation using $O(\log(1/\epsilon) / \epsilon)$ parallel rounds of function evaluations. Our algorithm is based on a continuous variant of the double greedy algorithm of Buchbinder \etal that achieves the optimal $1/2$ approximation in the sequential setting. Our algorithm applies more generally to the problem of maximizing a continuous diminishing-returns (DR) function.
\end{abstract}

\section{Introduction}

In this paper, we study parallel algorithms for the problem of maximizing a submodular function. A set function $f$ on a ground set $V$ is submodular if it satisfies the following {\em diminishing return} property: $f(A\cup \{v\}) - f(A) \ge f(B\cup\{v\}) - f(B)$ for all sets $A\subseteq B$ and all elements $v\not\in B$. The problem of maximizing a submodular function is a fundamental combinatorial optimization problem that captures many problems in both theory and practice. From the theory point of view, it generalizes well-studied problems including the maximum cut and the maximum directed cut problems. From the practical point of view, it captures many applications ranging from maximum a-posteriori (MAP) inference for determinantal point processes (DPP) and mean-field inference in log-submodular models, to quadratic programming and revenue maximization in social networks~\cite{kulesza2012determinantal, gillenwater2012near,bian2017continuous,ito2016large}. 

The problem of maximizing a submodular function has received considerable attention~\cite{feige2011maximizing,buchbinder2015tight}, leading to several algorithms based on random sampling, greedy, and local search that achieve constant factor approximation guarantees for the problem. In a breakthrough work, Buchbinder \etal~\cite{buchbinder2015tight} introduced the double greedy (also known as bi-directional greedy) algorithm, a very elegant algorithm that achieves a $1/2$ approximation, which is optimal in the value oracle model~\cite{feige2011maximizing}.

A significant drawback of greedy and local search algorithms is that they are inherently sequential and adaptive. The \emph{adaptivity} of an algorithm is the number of sequential rounds of queries it makes to the evaluation oracle of the function, where in every round the algorithm is allowed to make polynomially-many parallel queries. Motivated by applications in a wide-range of domains, Balkanski and Singer~\cite{BS18} initiated the study of adaptivity (or parallelization) for submodular maximization problems. The work~\cite{BS18} considered the problem of maximizing a monotone submodular function subject to a cardinality constraint, and gave an $1/3 - \eps$ approximation algorithm using $O(\log{n}/\eps^2)$ rounds of adaptivity as well as a hardness result showing that $\Omega(\log{n}/\log\log{n})$ rounds of adaptivity are necessary to obtain a $\Omega(1/\log{n})$ approximation. A recent line of work studies the tradeoff between approximation guarantee and adaptivity for both monotone and non-monotone submodular maximization problems subject to cardinality, packing, and matroid constraints~\cite{BS18,EN18,BRS18,FMZ18,CQ18,BBY18,balkanski2018optimal,ENV18}.

The work~\cite{ENV18} implies a $1/e - \eps$ approximation using $O(\log{n}/\eps^2)$ rounds for maximizing a submodular function as a special case. The random sampling algorithm of \cite{feige2011maximizing} achieves a $1/4$ approximation using one round of adaptivity: the algorithm returns a random set that includes each element independently at random with probability $1/2$ and never evaluates the function. In summary, for the problem of maximizing a submodular function, we can obtain a $1/4$ approximation in 1 adaptive round, a $1/e - \eps$ approximation in $\Theta(\log{n}/\eps^2)$ adaptive rounds, and a $1/2$ approximation in $\Theta(n)$ adaptive rounds.

\medskip
\textbf{Our contribution.} In this paper, we show that we can obtain a nearly-optimal approximation guarantee using a constant number of adaptive rounds. More precisely, we give a parallel algorithm that achieves a $1/2 - \eps$ approximation using $O(\log(1/\eps)/\eps)$ adaptive rounds. Our parallel algorithm is based on a continuous variant of the double greedy algorithm of Buchbinder \etal~\cite{buchbinder2015tight}. Our algorithm applies more generally to the problem of maximizing a continuous diminishing-returns (DR) submodular function.  Recent work has shown that DR-submodular optimization problems have applications beyond submodular maximization \cite{bian2016guaranteed,bian2017continuous,soma2017non,bian2018optimal}, including several of the applications mentioned above. 

\begin{theorem}
For every $\eps > 0$, there is an algorithm for maximizing a DR-submodular function $f: [0, 1]^n \rightarrow \R_+$ with the following guarantees:
\begin{itemize}
\item The algorithm is deterministic if provided oracle access for evaluating $f$ and its gradient $\nabla f$;
\item The algorithm achieves an approximation guarantee of $\frac{1}{2} -\eps$;
\item The number of rounds of adaptivity and evaluations of $f$ and $\nabla f$ are $O\left(\frac{\log(1/\eps)}{\eps} \right)$.
\end{itemize}
\end{theorem}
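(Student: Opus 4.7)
The plan is to lift the sequential double greedy algorithm of Buchbinder \etal to a continuous-time process on the cube $[0,1]^n$ and then discretize it into a small number of large parallel steps. In the continuous process I maintain iterates $\vec{x}^{(t)}, \vec{y}^{(t)} \in [0,1]^n$ initialized at $\vec{0}$ and $\vec{1}$ and flowing monotonically towards each other until they meet at a common point at time $T$. At time $t$, for each coordinate $i$, set $a_i = \max(\partial_i f(\vec{x}^{(t)}), 0)$ and $b_i = \max(-\partial_i f(\vec{y}^{(t)}), 0)$, and define velocities $\dot x_i = \alpha_i a_i$ and $-\dot y_i = \alpha_i b_i$ with $\alpha_i$ chosen so that the gap $y_i - x_i$ closes by time $T$ (i.e.\ $\alpha_i(a_i+b_i) = (y_i^{(t)} - x_i^{(t)})/(T-t)$). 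DR-submodularity guarantees $a_i + b_i \ge 0$, so this rule is always well defined.

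For the continuous analysis I would mimic the Buchbinder \etal potential argument. Fix an optimum $\vec{x}^*$ and let $\vec{z}^{(t)}$ be the interpolated optimum defined coordinatewise by $z_i^{(t)} = \mathrm{med}(x_i^{(t)}, y_i^{(t)}, x_i^*)$, so $\vec{z}$ tracks $\vec{x}^*$ on undecided coordinates and collapses onto the algorithm's iterate on decided ones. Set $\Phi(t) = f(\vec{x}^{(t)}) + f(\vec{y}^{(t)}) - 2 f(\vec{z}^{(t)})$. I would compute $\dot\Phi$ coordinate by coordinate: the gain to $f(\vec{x}) + f(\vec{y})$ from a tiny step is $\alpha_i(a_i^2 + b_i^2)$ up to first order, while the loss to $2 f(\vec{z})$ is, by DR-submodularity, at most $2\alpha_i a_i b_i$ (the standard absorbing argument, applied separately to the three cases $x_i^* \le x_i$, $x_i < x_i^* < y_i$, and $x_i^* \ge y_i$). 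Hence $\dot \Phi \ge \alpha_i(a_i - b_i)^2 \ge 0$, and integrating over $[0,T]$ together with $\Phi(0) \ge -2 f(\vec{x}^*)$ and $\vec{x}^{(T)} = \vec{y}^{(T)} = \vec{z}^{(T)}$ yields $2 f(\vec{x}^{(T)}) \ge 2 f(\vec{x}^*)$, i.e.\ the $1/2$ guarantee.

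To obtain the parallel algorithm I would discretize $[0,T]$ into $K = O(\log(1/\epsilon)/\epsilon)$ sub-intervals whose lengths form a geometric progression. In each sub-interval we query $\nabla f$ in parallel at $\vec{x}$ and $\vec{y}$, freeze the rates $\vec{a}, \vec{b}$, and advance by a straight-line step, for a total of $O(K)$ adaptive rounds, each with polynomially many queries. The main obstacle is controlling the discretization error in $\Phi$: because the frozen $a_i, b_i$ drift during a step, the per-round estimate of $\dot\Phi$ can be off. Here DR-submodularity is crucial: $\partial_i f(\vec{x})$ is monotonically non-increasing as $\vec{x}$ grows and $\partial_i f(\vec{y})$ monotonically non-decreasing as $\vec{y}$ shrinks, so the drift is one-sided and can be bounded by the change in $f(\vec{x}) + f(\vec{y})$ across the step. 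A telescoping argument then charges the cumulative discretization loss to $O(\epsilon) f(\vec{x}^*)$ provided the step lengths decay geometrically; the $\log(1/\epsilon)$ factor in the round count comes from resolving the multiplicative dynamic range of $a_i + b_i$, which can vary by $\poly(1/\epsilon)$ along the trajectory as the two iterates approach each other. Combining the continuous $1/2$ bound with the aggregated discretization slack yields the claimed $1/2 - \epsilon$ approximation in $O(\log(1/\epsilon)/\epsilon)$ rounds.
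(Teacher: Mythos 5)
The continuous half of your proposal is essentially the paper's own continuous dynamic: your velocity rule is the update of Lemma~\ref{lem:dg-inv2} (reparametrized per coordinate so the gaps close at time $T$), and your coordinatewise gain/loss accounting is exactly the invariant (\ref{eq:inv}) with $\alpha=1$. One sign slip: with $\Phi = f(\vec{x})+f(\vec{y})-2f(\vec{z})$ your bound $\dot\Phi\ge \alpha_i(a_i-b_i)^2$ is the wrong direction for the term $-2\nabla_i f(\vec z)\dot z_i$; you want $\Phi = f(\vec{x})+f(\vec{y})+2f(\vec{z})$ (equivalently, compare the gain in $\frac12(f(\vec x)+f(\vec y))$ to the drop in $f(\vec z)$), after which the $1/2$ bound follows as in Lemma~\ref{lem:bound-alpha}.

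The genuine gap is the discretization. A \emph{fixed} geometric schedule of step lengths cannot work, because $f$ is only assumed DR-submodular: there is no bound on the magnitude of the (nonpositive) Hessian, so within a single frozen-rate step of any prescribed length $\nabla_i f(\vec x)$ can plunge from positive to very negative (and $\nabla_i f(\vec y)$ rise symmetrically) essentially immediately. The realized change in $f(\vec x)+f(\vec y)$ over that step can then be \emph{negative}, while $\vec p$ (hence $f(\vec p)$) need not move at all when $x_i^*$ is strictly interior to $[x_i,y_i]$; the per-step invariant fails outright and no telescoping of "one-sided drift'' recovers it, since the drift, though one-sided, is not bounded by any quantity you control. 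The paper's fix is an \emph{adaptive} line search (line~\ref{line:line-search}): take the largest $\eta$ for which the realized gain is at least $(1-\eps)$ times its first-order prediction, which is exactly what Lemma~\ref{lem:disc-invariant} needs. The round count then requires a second, different potential, $\Phi^{(t)} = \langle \nabla f(\xt)-\nabla f(\yt), \one{\mathcal{S}^{(t)}}\rangle$, which drops by a $(1-\eps/2)$ factor whenever the line search saturates (Theorem~\ref{thm:iterations}); and bounding the number of drops needs two further ingredients missing from your sketch: initialization at $\eps\ones$ and $(1-\eps)\ones$ rather than $\vec 0,\vec 1$ (at $\vec 0$ the gradient can be arbitrarily large relative to $M$, so your claim that the dynamic range is $\poly(1/\eps)$ is unjustified), and an explicit stopping rule $\langle\nabla f(\x)-\nabla f(\y),\y-\x\rangle<\eps M$ that simultaneously lower-bounds $\Phi^{(t)}$ and certifies that only $\eps M$ of value remains in the box $[\x,\y]$, so the process need not run until $\x=\y$.
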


\medskip\noindent
{\bf Related work.} The same result was obtained independently by Chen, Feldman, and Karbasi~\cite{CFK18}.

\subsection{Our techniques}

Similar to the double greedy algorithm for the sequential setting, our algorithm maintains two solutions $\vec{x}$ and $\vec{y}$ and iteratively makes them more similar over time. In contrast with the sequential algorithm where the coordinates are fixed one by one, our algorithm updates many coordinates in parallel in each iterations based on the gradient of $f$ at $\vec{x}$ and $\vec{y}$. Intuitively the gradients give an upper bound on the {\em potential} gain we can get from changing the coordinates of $\vec{x}$ and $\vec{y}$. Our algorithm works following the potential function that measure exactly this amount:
\[ \Phi = \langle \nabla f(\x) - \nabla f(\y), \one{\mathcal{S}}\rangle,\]
where $\mathcal{S}$ is the set of all coordinates $i$ for which $\nabla_i f(\vec{x}) > 0$ and $\nabla_i f(\vec{y}) < 0$. The sum of the positive coordinates of $\nabla f(\vec{x})$ is an upper bound on how much $f(\vec{x})$ can increase by increasing $\vec{x}$. The sum of the negative coordinates of $\nabla f(\vec{y})$ is an upper bound on how much $f(\vec{y})$ can increase by decreasing $\vec{y}$. The algorithm increases $\vec{x}$ and decreases $\vec{y}$ in iterations until either they meet or the maximum potential gain becomes too small. At that point, we can return $\vec{x}$ as our approximate solution. Note that during the course of execution, by submodularity, the potential can never increase.

Let $M = \max_{\vec{z}\in[0,1]^n} f(\vec{z})$. The main part of the execution happens during the time where the potential goes from $M/\eps$ down to $\eps M$. As mentioned before, once the potential drops below $\eps M$, the algorithm finishes as there is not much more room for improvement. In addition to $\vec{x}$ and $\vec{y}$, the analysis maintains the projection of the optimal solution $\vec{x}^*$ to the box defined by $\vec{x}$ and $\vec{y}$ i.e. $\vec{p} = (\vec{x}^* \wedge \vec{y})\vee \vec{x}$. For each iteration, we need to analyze the gain in $(f(\vec{x})+f(\vec{y}))/2$ and the loss in $f(\vec{p})$ and show that the gain is at least as large as the loss. Consider a coordinate $i$. If $(\nabla f(\vec{x}))_i \le 0$ then, by submodularity, $(\nabla f(\vec{z}))_i \le 0$ for all $\vec{z} \ge \vec{x}$. Thus, we can immediately reduce $y_i$ to $x_i$ and in the process, increase the value of $f(\vec{y})$. This step also potentially lowers $p_i$ to $x_i$ but since the gradient is negative, this step also increases the value of $f((\vec{x}^* \wedge \vec{y}) \vee \vec{x})$. A similar argument works for the case $(\nabla f(\vec{y}))_i \ge 0$. Thus, the interesting coordinates are the set $S$ of coordinates $i$ where $(\nabla f(\vec{y}))_i < 0$ and $(\nabla f(\vec{x}))_i > 0$. For each coordinate $i \in S$, our algorithm increases $x_i$ and decreases $y_i$ proportional to the corresponding gradient entries: the increase in $x_i$ is $\eta \nabla_i f(\x) / (\nabla_i f(\x) - \nabla_i f(\y))$ and the decrease in $y_i$ is $\eta (- \nabla_i f(\y)) / (\nabla_i f(\x) - \nabla_i f(\y))$. The step size $\eta$ is chosen so that the potential remains roughly the same as before up to a $1 - \eps$ factor. The effect of this step is that the values of $f(\vec{x})$ and $f(\vec{y})$ go up but the value of $f(\vec{p})$ might decrease (because we need to project the optimal solution $\vec{x}^*$ to a smaller box). It turns out that one can relate these changes and argue that the gain outweighs the loss (Lemma~\ref{lem:disc-invariant}). Thus, at the end of the algorithm, the values of $f(\vec{x}), f(\vec{y}), f(\vec{p})$ are close to each other and because the total gain of going from $(f(\vec{0})+f(\vec{1}))/2$ to $(f(\vec{x})+f(\vec{y}))/2$ outweighs the loss of going from $f(\vec{x}^*)$ to $f(\vec{p})$,  the values of $f(\vec{x}), f(\vec{y})$ are at least $f(\vec{x}^*)/2$.

To analyze the number of iterations, we show that each iteration decreases the potential by a $1 - \eps$ factor. Thus the potential decreases from its initial value to $\eps M$. To make sure that the number of iterations is small, we need to start the process from a point where the gradient is not too large i.e. $\|\nabla f(\vec{x})\vee \vec{0}\|_1 \le M/\eps$. It turns out that there is a simple solution: we start from $\vec{x} = \eps \vec{1}$. Because $f(\vec{0})\ge 0$, $f(\eps \vec{1}_T)\le M~\forall T\subseteq V$ and the diminishing return property, we have $\sum_{e\in T} (\nabla f(\eps \vec{1}))_e \le M/\eps\,$ which implies $\|\nabla f(\vec{x})\vee \vec{0}\|_1 \le M/\eps$. Starting from this point does not result in a significant loss in the optimal value because $f(\vec{x}^* \vee \vec{x}) \ge (1-\|\vec{x}\|_{\infty}) f(\vec{x}^*)~\forall \vec{x}$. Thus, the potential decreases from $M/\eps$ to $\eps M$ in $O(\ln(1/\eps)/\eps)$ iterations.

\medskip
{\bf Paper outline.}
In Section~\ref{sec:contdoublegreedy}, we describe and analyze a continuous variant of the sequential double greedy algorithm of Buchbinder \etal~\cite{buchbinder2015tight} that updates many coordinates simultaneously. We build on this algorithm and analysis in Section~\ref{sec:paralleldoublegreedy}, and obtain our parallel double greedy algorithm.

\section{Preliminaries}

Let $f: [0, 1]^n \rightarrow \R_+$ be a non-negative function. The function is \emph{diminishing returns submodular} (DR-submodular) if $\forall \vec{x} \leq \vec{y} \in [0, 1]^n$ (where $\leq$ is coordinate-wise), $\forall i \in [n]$, $\forall \delta \in [0, 1]$ such that $\vec{x} + \delta \vec{1}_{\{i\}}$ and $\vec{y} + \delta \vec{1}_{\{i\}}$ are still in $[0, 1]^n$, it holds
  \[f(\vec{x} + \delta \vec{1}_{\{i\}}) - f(\vec{x}) \geq f(\vec{y} + \delta \vec{1}_{\{i\}}) - f(\vec{y}),\]
where $\vec{1}_{\{i\}}$ is the $i$-th basis vector, i.e., the vector whose $i$-th entry is $1$ and all other entries are $0$.

If $f$ is differentiable, $f$ is DR-submodular if and only if $\nabla f(\vec{x}) \geq \nabla f(\vec{y})$ for all $\vec{x} \leq \vec{y} \in [0, 1]^n$. If $f$ is twice-differentiable, $f$ is DR-submodular if and only if all the entries of the Hessian are \emph{non-positive}, i.e., $\frac{\partial^2 f}{\partial x_i \partial x_j}(\vec{x}) \leq 0$ for all $i, j \in [n]$. 

For simplicity, throughout the paper, we assume that $f$ is differentiable. We assume that we are given black-box access to an oracle for evaluating $f$ and its gradient $\nabla f$. We extend the function $f$ to $\R^n_+$ as follows: $f(\vec{x}) = f(\vec{x} \wedge \vec{1})$, where $(\vec{x} \wedge \vec{1})_i = \min\{x_i, 1\}$.

The multilinear extension of a submodular set function is DR-submodular~\cite{Calinescu2011,Vondrak2008}. A fractional solution to the problem of maximizing the multilinear extension can be rounded without any loss: given $\x \in [0, 1]^n$, round up each coordinate $i$ independently at random with probability $x_i$.

\medskip
{\bf Basic notation.} Let $V$ be a finite ground set of size $n = |V|$; without loss of generality, $V = \{1, 2, \dots, n\} = [n]$. We use e.g. $\vec{x} = (x_1, \dots, x_n)$ to denote a vector in $\R^n$. 
We use the following vector operations: $\vec{x} \vee \vec{y}$ is the vector whose $i$-th coordinate is $\max\{x_i, y_i\}$; $\vec{x} \wedge \vec{y}$ is the vector whose $i$-th coordinate is $\min\{x_i, y_i\}$; $\vec{x} \circ \vec{y}$ is the vector whose $i$-th coordinate is $x_i \cdot y_i$. We write $\vec{x} \leq \vec{y}$ to denote that $x_i \leq y_i$ for all $i \in [n]$. Let $\vec{0}$ (resp. $\vec{1}$) be the $n$-dimensional all-zeros (resp. all-ones) vector. Let $\vec{1}_S \in \{0, 1\}^V$ denote the indicator vector of $S \subseteq V$, i.e., the vector that has a $1$ in entry $i$ if and only if $i \in S$. Similarly given a vector $\vec{x}$, we let $\vec{1}_{\vec{x}}$ be the indicator vector for strictly positive elements of $\vec{x}$.

We will use the following result that was shown in previous work~\cite{ChekuriJV15}.

\begin{lemma}[{\cite[Lemma~7]{ChekuriJV15}}]
\label{lem:x-or-opt}
Let $f: [0, 1]^n \rightarrow \R_+$ be a DR-submodular function. For all $\vec{x}^* \in [0, 1]^n$ and $\vec{x} \in [0, 1]^n$, $f(\vec{x}^* \vee \vec{x}) \geq (\vec{1} - \|\vec{x}\|_{\infty}) f(\vec{x}^*)$.
\end{lemma}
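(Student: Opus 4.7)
The plan is to prove the lemma by leveraging the fact that DR-submodularity is equivalent to concavity of $f$ along any non-negative direction, and then applying concavity on a cleverly extended segment. Let $\lambda = \|\vec{x}\|_{\infty}$. The case $\lambda = 0$ is trivial since then $\vec{x} = \vec{0}$ and $\vec{x}^* \vee \vec{x} = \vec{x}^*$, so assume $\lambda \in (0, 1]$. Define $\vec{w} = (\vec{x} - \vec{x}^*) \vee \vec{0}$, which is non-negative and satisfies $\vec{x}^* + \vec{w} = \vec{x}^* \vee \vec{x}$.

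The key step I would establish first is that the extended point $\vec{x}^* + (1/\lambda)\vec{w}$ still lies in $[0,1]^n$. For coordinates $i$ with $w_i = 0$ this is immediate. For coordinates $i$ with $w_i > 0$ (so $x_i > x_i^*$ and hence $x_i \leq \lambda$), I would check the inequality $x_i^* + (x_i - x_i^*)/\lambda \leq 1$; this rearranges to $x_i \leq \lambda + (1-\lambda)x_i^*$, which follows from $x_i \leq \lambda$ together with $(1-\lambda)x_i^* \geq 0$. This is where the choice of the scaling factor $1/\lambda$ is exactly tight.

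Next, consider the one-dimensional function $\psi(t) := f(\vec{x}^* + t\vec{w})$ on $t \in [0, 1/\lambda]$. Since $\vec{w} \geq \vec{0}$ and $f$ is DR-submodular, $\nabla f$ is coordinate-wise non-increasing along the ray, so the directional derivative $\langle \nabla f(\vec{x}^* + t\vec{w}), \vec{w}\rangle$ is non-increasing in $t$, i.e., $\psi$ is concave on its domain. Writing $1 = (1-\lambda)\cdot 0 + \lambda \cdot (1/\lambda)$ and applying concavity gives
\[
\psi(1) \;\geq\; (1-\lambda)\,\psi(0) \;+\; \lambda\,\psi(1/\lambda).
\]
Substituting $\psi(0) = f(\vec{x}^*)$, $\psi(1) = f(\vec{x}^* \vee \vec{x})$, and $\psi(1/\lambda) \geq 0$ (by non-negativity of $f$) yields $f(\vec{x}^* \vee \vec{x}) \geq (1 - \lambda) f(\vec{x}^*)$, which is exactly the claim.

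The main obstacle is the domain-extension step: one needs to check that moving beyond $\vec{x}^* \vee \vec{x}$ along the direction $\vec{w}$ by a factor of $1/\lambda$ still lands inside $[0,1]^n$. Once this is verified, the rest is a clean application of concavity along a non-negative direction together with $f \geq 0$. Notably, the argument does not use any monotonicity of $f$; the lower bound on $\psi(1/\lambda)$ is only the trivial one, and it is the slack between $1$ and $1/\lambda$ that exactly produces the $(1-\lambda)$ factor.
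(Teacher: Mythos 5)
Your proof is correct. Note that the paper does not reproduce a proof of this lemma---it cites it from ChekuriJV15---so there is no in-paper argument to compare against line by line. Your argument (write $\vec{x}^* \vee \vec{x} = \vec{x}^* + \vec{w}$ with $\vec{w} = (\vec{x} - \vec{x}^*) \vee \vec{0} \geq \vec{0}$, verify that $\vec{x}^* + (1/\lambda)\vec{w}$ stays in $[0,1]^n$, then use concavity of $t \mapsto f(\vec{x}^* + t\vec{w})$ together with nonnegativity of $f$ at the endpoint $t = 1/\lambda$) is sound: the domain check is exactly tight, the directional derivative $\langle \nabla f(\vec{x}^* + t\vec{w}), \vec{w}\rangle$ is indeed nonincreasing since $\nabla f$ is coordinatewise antitone and $\vec{w} \geq \vec{0}$, and the three-point concavity inequality at $t = 0, 1, 1/\lambda$ gives precisely the claimed $(1-\lambda)$ factor. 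This is the standard concavity-in-nonnegative-directions derivation and matches the flavor of the argument in the cited reference.
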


The following result follows from concavity in non-negative directions.

\begin{lemma}
\label{lem:concavity}
Let $f: [0, 1]^n \rightarrow \R_+$ be a DR-submodular function. For all $\vec{x} \leq \vec{y}$,
\[ \left< \nabla f(\vec{x}), \vec{y} - \vec{x} \right> \geq f(\vec{y}) - f(\vec{x}) \geq \left< \nabla f(\vec{y}), \vec{y} - \vec{x} \right> \]
\end{lemma}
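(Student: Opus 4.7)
The plan is to reduce the multi-dimensional statement to a one-dimensional concavity argument along the segment from $\vec{x}$ to $\vec{y}$. Define the univariate restriction
\[ g(t) \;=\; f\bigl(\vec{x} + t(\vec{y}-\vec{x})\bigr), \qquad t \in [0,1], \]
so that $g(0) = f(\vec{x})$ and $g(1) = f(\vec{y})$. Since $f$ is differentiable, so is $g$, and by the chain rule $g'(t) = \langle \nabla f(\vec{x} + t(\vec{y}-\vec{x})), \vec{y}-\vec{x}\rangle$. Note that both endpoint inner products $\langle \nabla f(\vec{x}), \vec{y}-\vec{x}\rangle$ and $\langle \nabla f(\vec{y}), \vec{y}-\vec{x}\rangle$ are exactly $g'(0)$ and $g'(1)$.

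The key step is to show that $g$ is concave on $[0,1]$, equivalently that $g'$ is non-increasing. For $0 \le s \le t \le 1$, the points $\vec{x} + s(\vec{y}-\vec{x})$ and $\vec{x} + t(\vec{y}-\vec{x})$ lie in $[0,1]^n$ and satisfy the coordinate-wise inequality $\vec{x} + s(\vec{y}-\vec{x}) \le \vec{x} + t(\vec{y}-\vec{x})$, because the direction $\vec{y}-\vec{x}$ is non-negative by the hypothesis $\vec{x} \le \vec{y}$. The characterization of DR-submodularity for differentiable $f$ recalled just before the lemma then yields $\nabla f(\vec{x}+s(\vec{y}-\vec{x})) \ge \nabla f(\vec{x}+t(\vec{y}-\vec{x}))$ coordinate-wise. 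Taking the inner product of this coordinate-wise inequality with the non-negative vector $\vec{y}-\vec{x}$ preserves the inequality, giving $g'(s) \ge g'(t)$.

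With $g'$ non-increasing, the standard one-dimensional argument finishes the proof: by the fundamental theorem of calculus,
\[ f(\vec{y}) - f(\vec{x}) \;=\; g(1) - g(0) \;=\; \int_0^1 g'(t)\,dt, \]
and since $g'(0) \ge g'(t) \ge g'(1)$ for all $t\in[0,1]$, integrating yields $g'(0) \ge g(1)-g(0) \ge g'(1)$, which upon substituting the expressions for $g'(0)$ and $g'(1)$ is precisely the claimed double inequality.

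There is no real obstacle here: the only subtlety is checking that the monotonicity of $\nabla f$, stated for coordinate-wise comparable points in $[0,1]^n$, can be applied along the whole segment, which is immediate because $[0,1]^n$ is convex and the segment from $\vec{x}$ to $\vec{y}$ is monotone in $t$ thanks to $\vec{y}-\vec{x} \ge \vec{0}$. The argument uses differentiability of $f$ (assumed throughout the paper) and the gradient characterization of DR-submodularity; no additional smoothness is required.
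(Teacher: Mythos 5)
Your proof is correct and is precisely the argument the paper is alluding to when it says the lemma ``follows from concavity in non-negative directions'': you restrict $f$ to the segment, use the gradient characterization of DR-submodularity together with $\vec{y}-\vec{x}\ge\vec{0}$ to show the derivative of the restriction is non-increasing, and conclude by the fundamental theorem of calculus. The paper omits these details, but there is no difference in approach.
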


\section{Continuous Double Greedy Dynamics}
\label{sec:contdoublegreedy}


Given two points $\xt \leq \yt$ we define $\pt=\textnormal{Proj}_{[\xt,\yt]}x^{*}$.
We initialize the algorithm with $\xz=\vec{0}$, $\yz=\vec{1}$,
which means that $\pz=\xopt$. Throughout the algorithm we update
$\xt$ and $\yt$ such that the following invariant holds for
every $t$:
\begin{align}
\frac{d}{dt}\left(\frac{1}{2}\left(f(\xt)+f(\yt)\right)+\alpha\cdot f(\pt)\right) & \geq0\nonumber 
\end{align} 
or equivalently
\begin{align}
\frac{1}{2} \left(\langle\nabla f(\xt),\xtdot\rangle+\langle\nabla f(\yt),\ytdot\rangle \right) +\alpha\cdot\langle\nabla f(\pt),\ptdot \rangle & \geq0\label{eq:inv}
\end{align}

\begin{lemma}
\label{lem:bound-alpha}
Consider a continuous trajectory for $(\xt,\yt)_{0\leq t\leq1}$ such that at all times $\xt\leq \yt$, and $\xone = \yone$. If the invariant from (\ref{eq:inv}) holds for al $t$, then $f(\xt)\geq\frac{\alpha}{1+\alpha}f(\xopt)$.
\end{lemma}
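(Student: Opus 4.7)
The plan is essentially to integrate the pointwise invariant from $t=0$ to $t=1$ and then plug in the boundary conditions that the problem setup gives us. Concretely, by the fundamental theorem of calculus applied to the quantity $\Psi(t) := \tfrac{1}{2}(f(\xt) + f(\yt)) + \alpha\, f(\pt)$, the assumption that $\tfrac{d}{dt}\Psi(t) \geq 0$ for all $t \in [0,1]$ yields $\Psi(1) \geq \Psi(0)$.

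Now I would evaluate both endpoints using the initialization and termination conditions. At $t=0$ we have $\xz = \vec{0}$ and $\yz = \vec{1}$, so $\xopt \in [\xz,\yz]$ and hence $\pz = \textnormal{Proj}_{[\xz,\yz]}\xopt = \xopt$; this makes $\Psi(0) = \tfrac{1}{2}(f(\vec{0}) + f(\vec{1})) + \alpha\, f(\xopt)$. At $t = 1$ the hypothesis says $\xone = \yone$, so the box $[\xone,\yone]$ degenerates to a single point and therefore $\pone = \xone = \yone$; this makes $\Psi(1) = (1+\alpha)\, f(\xone)$.

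Combining the inequality $\Psi(1)\geq \Psi(0)$ with the non-negativity of $f$ (so that $f(\vec{0}), f(\vec{1}) \geq 0$) gives
\[
(1+\alpha)\, f(\xone) \;\geq\; \tfrac{1}{2}\bigl(f(\vec{0}) + f(\vec{1})\bigr) + \alpha\, f(\xopt) \;\geq\; \alpha\, f(\xopt),
\]
which rearranges to the claimed bound $f(\xone) \geq \tfrac{\alpha}{1+\alpha} f(\xopt)$.

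The argument is purely a telescoping/integration of the assumed monotonicity of $\Psi$, so the only real issue is justifying the integration itself; this requires only that the trajectories $\xt, \yt, \pt$ be absolutely continuous (or piecewise $C^1$) so that $\Psi$ is absolutely continuous and $\Psi(1) - \Psi(0) = \int_0^1 \Psi'(t)\,dt$. I would expect this regularity to be inherited from how the continuous dynamics are defined in the construction that follows, so the lemma itself is essentially a one-line consequence of the invariant; the real work of the paper lies elsewhere, namely in exhibiting dynamics that actually satisfy~(\ref{eq:inv}) with as large an $\alpha$ as possible (ideally $\alpha = 1$, giving the $1/2$ bound).
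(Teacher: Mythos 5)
Your proof is correct and follows essentially the same route as the paper: integrate the invariant (\ref{eq:inv}) over $[0,1]$, substitute the boundary values $\pz=\xopt$ and $\pone=\xone=\yone$, and drop the nonnegative terms $f(\xz),f(\yz)$. The only cosmetic difference is that you phrase the integration as monotonicity of the combined potential $\Psi$, whereas the paper writes out the telescoped differences explicitly.
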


\begin{proof}
We consider the total gain in function value i.e. $$\frac{1}{2}\left(f(\xone)-f(\xz)+f(\yone)-f(\yz)\right)$$
and compare it to the total drop in function value for the projected
optimum i.e. $f(\pz)-f(\pone).$ By integrating (\ref{eq:inv})
we obtain
\[
\frac{1}{2}\left(f(\xone)-f(\xz)+f(\yone)-f(\yz)\right)\geq\alpha\cdot\left(f(\pz)-f(\pone)\right)
\]
Since $\xone =\yone =\pone :=\vec{x}$, and $\pz=\xopt$ we have
\begin{align*}
\frac{1}{2}\left(2f(\vec{x})-f(\xz)-f(\yz)\right) & \geq\alpha\cdot\left(f(\xopt)-f(\vec{x})\right)\\
(1+\alpha)f(\vec{x}) & \geq\frac{1}{2}\left(f(\xz)+f(\yz)\right)+\alpha f(\xopt)\\
f(\vec{x}) & \geq\frac{1}{2\alpha}\left(f(\xz)+f(\yz)\right)+\frac{\alpha}{1+\alpha}f(\xopt)\geq\frac{\alpha}{1+\alpha}f(\xopt)
\end{align*}
\end{proof}

We now describe two strategies that enforce (\ref{eq:inv}) with $\alpha = 1$, and thus they yield a $1/2$ approximation. The first strategy can be viewed a continuous version of the Buchbinder \etal discrete double greedy algorithm. Our parallel algorithm that we give in Section~\ref{sec:paralleldoublegreedy} is a discretization of this continuous dynamic. A key difference between this continuous dynamic (and its corresponding discretization given in Section~\ref{sec:paralleldoublegreedy}) is that it updates many coordinates simultaneously, whereas the discrete double greedy algorithm of Buchbinder \etal updates only one coordinate at a time.

\begin{lemma}
\label{lem:dg-inv}
The following update rule preserves (\ref{eq:inv}) with $\alpha = 1$. For every coordinate $i$ such that $\nabla_i f(\xt) > 0$ and $\nabla_i f(\yt) < 0$, we set
\begin{align*}
  \xtdot &= \frac{\nabla_i f(\xt)}{\nabla_i f(\xt) - \nabla_i f(\yt)}\\
  \ytdot &= \frac{\nabla_i f(\yt)}{\nabla_i f(\xt) - \nabla_i f(\yt)}
\end{align*}
\end{lemma}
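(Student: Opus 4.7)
The plan is to verify the invariant \eqref{eq:inv} coordinate by coordinate; since the LHS splits as a sum over $i \in [n]$, it suffices to show each summand is non-negative. Abbreviate $a_i = \nabla_i f(\xt)$, $b_i = \nabla_i f(\yt)$, $c_i = \nabla_i f(\pt)$; since $\xt \le \pt \le \yt$ and $f$ is DR-submodular, $a_i \ge c_i \ge b_i$ for every $i$. For any coordinate outside the active set $\{i : a_i > 0,\, b_i < 0\}$, the rule leaves $x_i$ and $y_i$ fixed, and because $p_i$ is the median of $x^*_i,\, x_i,\, y_i$ it is also fixed, so the coordinate's contribution is $0$.

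For an active coordinate I drop subscripts and write $a, b, c$. The $\xt$ and $\yt$ terms jointly contribute $(a^2 + b^2)/(2(a-b))$. The $\pt$ term splits on where $x^*_i$ currently sits in $[x_i, y_i]$: if $x^*_i \le x_i$ then $\dot p_i = \dot x_i = a/(a-b)$; if $x^*_i \ge y_i$ then $\dot p_i = \dot y_i = b/(a-b)$; otherwise $p_i = x^*_i$ is momentarily constant, so $\dot p_i = 0$. The three sub-cases produce coordinate-wise totals with numerators $a^2 + b^2 + 2ca$, $a^2 + b^2 + 2cb$, and $a^2 + b^2$ respectively, all divided by $2(a-b) > 0$.

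To finish I would check each numerator is non-negative by a short case analysis on the sign of $c$. In the first sub-case: if $c \ge 0$ then $2ca \ge 0$ and the result is immediate; if $c < 0$ I write $a^2 + b^2 + 2ca = (a+c)^2 + (b^2 - c^2)$ and use $b \le c < 0 \Rightarrow b^2 \ge c^2$. The second sub-case is symmetric: if $c \le 0$ then $2cb \ge 0$, and if $c > 0$ I write $a^2 + b^2 + 2cb = (b+c)^2 + (a^2 - c^2)$ and use $a \ge c > 0 \Rightarrow a^2 \ge c^2$. The third sub-case is trivial.

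The only real obstacle is the bookkeeping forced by the projection $\pt$: $\dot p_i$ is case-dependent rather than a single closed-form expression. Once that case split is unpacked, everything reduces to elementary algebra using the sandwich $a \ge c \ge b$ together with the sign information $a > 0 > b$; no integration or cross-coordinate cancellation is needed because the invariant in fact holds pointwise.
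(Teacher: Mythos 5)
Your proof is correct and follows essentially the same route as the paper's: coordinate-wise verification, the same case split on where $p_i$ sits relative to $x_i$ and $y_i$, and the same use of the sandwich $\nabla_i f(\xt) \ge \nabla_i f(\pt) \ge \nabla_i f(\yt)$. The only difference is the final algebra: the paper replaces $c = \nabla_i f(\pt)$ by the adverse endpoint ($b$ when $\dot p_i > 0$, $a$ when $\dot p_i < 0$) so that every case reduces to the single inequality $(a+b)^2/(2(a-b)) \ge 0$, whereas you keep $c$ and complete the square case-by-case on its sign --- both are valid.
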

\begin{proof}
We show that the invariant (\ref{eq:inv}) is maintained for every coordinate in turn. Consider a coordinate $i$. If $\nabla f(\xt) \leq 0$ or $\nabla f(\yt) \geq 0$, we have $\xtdot = \ytdot = \ptdot = 0$, and the invariant holds. Therefore we may assume that $\nabla f(\xt) > 0$ and $\nabla f(\yt) < 0$. We have
\begin{align*}
\nabla_i f(\xt) \xtdot &= \frac{(\nabla_i f(\xt))^2}{\nabla_i f(\xt) - \nabla_i f(\yt)}\\
\nabla_i f(\yt) \ytdot &= \frac{(\nabla_i f(\yt))^2}{\nabla_i f(\xt) - \nabla_i f(\yt)}
\end{align*}
We now analyze $\nabla_i f(\pt) \ptdot$. Note that $\xt_i$ increases and $\yt_i$ decreases. Additionally, $\pt_i$ changes only if it is equal to $\xt_i$ or $\yt_i$. We consider each of these cases in turn:
\begin{itemize}
\item $\pt_i = \xt_i$. In this case, we have $\ptdot_i = \xtdot_i > 0$. Since $\xt \leq \pt \leq \yt$, we have $\nabla f(\xt) \geq \nabla f(\pt) \geq \nabla f(\yt)$. Thus
\[ \nabla_i f(\pt) \ptdot_i 
  = \nabla_i f(\pt) \xtdot_i
  \geq \nabla_i f(\yt) \xtdot_i
  = \frac{\nabla_i f(\xt) \nabla_i f(\yt)}{\nabla_i f(\xt) - \nabla_i f(\yt)}
\]
\item $\pt_i = \yt_i$. In this case, we have $\ptdot_i = \ytdot_i < 0$. Since $\xt \leq \pt \leq \yt$, we have $\nabla f(\xt) \geq \nabla f(\pt) \geq \nabla f(\yt)$. Thus
\[ \nabla_i f(\pt) \ptdot_i 
  = \nabla_i f(\pt) \ytdot_i
  \geq \nabla_i f(\xt) \ytdot_i
  = \frac{\nabla_i f(\xt) \nabla_i f(\yt)}{\nabla_i f(\xt) - \nabla_i f(\yt)}
\]
Therefore
\[
\frac{1}{2} \left(\nabla_i f(\xt) \xtdot_i + \nabla_i f(\yt) \ytdot_i \right) + \nabla_i f(\pt) \ptdot_i \geq \frac{1}{2} \frac{(\nabla_i f(\xt) + \nabla_i f(\yt))^2}{\nabla_i f(\xt) - \nabla_i f(\yt)} \geq 0
\]
\end{itemize}
\end{proof}

The following strategy is also very natural and we can analyze using a similar proof (see the appendix). 

\begin{lemma}
\label{lem:dg-inv2}
Setting $\xtdot = \nabla f(\xt)^+$, $\ytdot=\nabla f(\xt)^-$ preserves (\ref{eq:inv}), with $\alpha=1$.
\end{lemma}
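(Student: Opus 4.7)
The plan is to verify (\ref{eq:inv}) coordinate by coordinate, mirroring the proof of Lemma~\ref{lem:dg-inv}. I take the update to be the symmetric one $\xtdot_i = \max\{\nabla_i f(\xt),0\}$ and $\ytdot_i = \min\{\nabla_i f(\yt),0\}$, and it suffices to show that at every coordinate $i$,
\[\frac{1}{2}\bigl(\nabla_i f(\xt)\,\xtdot_i + \nabla_i f(\yt)\,\ytdot_i\bigr) + \nabla_i f(\pt)\,\ptdot_i \geq 0.\]
Throughout, I exploit the DR-submodularity chain $\nabla_i f(\xt)\geq\nabla_i f(\pt)\geq\nabla_i f(\yt)$ implied by $\xt\leq\pt\leq\yt$.

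I split into three cases by the signs of the two gradient entries. When $\nabla_i f(\xt)\leq 0$, DR-submodularity forces $\nabla_i f(\yt)\leq 0$, so $\xtdot_i=0$ and $\ytdot_i=\nabla_i f(\yt)\leq 0$; the $(x,y)$ contribution $\tfrac12(\nabla_i f(\yt))^2$ is non-negative, and $\ptdot_i$ is nonzero only when $\pt_i=\yt_i$, in which case $\ptdot_i\leq 0$ and $\nabla_i f(\pt)\leq\nabla_i f(\xt)\leq 0$ together give $\nabla_i f(\pt)\,\ptdot_i\geq 0$. The mirror case $\nabla_i f(\yt)\geq 0$ is handled identically with the roles of $\xt$ and $\yt$ swapped.

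The substantive case is $\nabla_i f(\xt)>0$ and $\nabla_i f(\yt)<0$. Here $\xtdot_i=\nabla_i f(\xt)$ and $\ytdot_i=\nabla_i f(\yt)$, so the $(x,y)$ contribution equals $\tfrac12\bigl((\nabla_i f(\xt))^2+(\nabla_i f(\yt))^2\bigr)$. For the $\pt$ term I reuse the split from Lemma~\ref{lem:dg-inv}: if $\pt_i=\xt_i$ then $\ptdot_i=\xtdot_i>0$ and $\nabla_i f(\pt)\geq\nabla_i f(\yt)$ yield $\nabla_i f(\pt)\,\ptdot_i\geq\nabla_i f(\xt)\,\nabla_i f(\yt)$; the $\pt_i=\yt_i$ subcase is symmetric and gives the same lower bound. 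Summing produces $\tfrac12(\nabla_i f(\xt)+\nabla_i f(\yt))^2\geq 0$.

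The one thing that requires care is absorbing the potentially very negative cross term $\nabla_i f(\xt)\,\nabla_i f(\yt)$ coming from the movement of $\pt$ against the sum of squared gradients. This is resolved by the same algebraic identity $\tfrac12(a^2+b^2)+ab=\tfrac12(a+b)^2$ that drove Lemma~\ref{lem:dg-inv}, and it is precisely what explains why these unnormalized gradient updates are still large enough to preserve (\ref{eq:inv}). The remaining work is mechanical sign bookkeeping via the gradient monotonicity chain.
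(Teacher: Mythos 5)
Your argument is correct and essentially mirrors the paper's appendix proof: both derive the per-coordinate bound $\nabla_i f(\pt)\,\ptdot_i \geq \nabla_i f(\xt)^+\,\nabla_i f(\yt)^-$ from the gradient monotonicity chain $\nabla_i f(\xt)\geq\nabla_i f(\pt)\geq\nabla_i f(\yt)$ and then close with the completed square $\tfrac12(a^2+b^2)+ab\geq 0$. You were also right to read the statement's $\ytdot = \nabla f(\xt)^-$ as a typo for $\nabla f(\yt)^-$ --- that is the update rule the paper's own proof actually analyzes, and the paper's use of the $^+/^-$ truncation operators merely states your three-way sign split uniformly.
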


It is a simple observation that if the dynamic stops before making
$\xt = \yt$ we can return any of the two points.

\section{Parallel Double Greedy Algorithm}
\label{sec:paralleldoublegreedy}
The discrete parallel version of double greedy is based on the continuous method described in Section~\ref{sec:contdoublegreedy} (we consider the update rule analyzed in Lemma~\ref{lem:dg-inv}). The key point is that instead of taking infinitesimally small steps, the updates in $\vec{x}$ and $\vec{y}$ are simultaneously scaled by the largest possible step size until the first order approximation of the gain from the average of the new points fails to approximate the gain anticipated by the average of the old points within a factor of $1-\epsilon$. This is precisely captured by the condition specified by the algorithm on line~\ref{line:line-search}.
\begin{figure}[t]
\begin{algorithmic}[1]
\Procedure{ParallelDoubleGreedy($f, M$)}{}
\State $\x \gets \eps \ones$, $\y \gets (1-\eps)\ones$
\While {$\langle \nabla f(\x)-\nabla f(\y) , \y-\x  \rangle \geq \epsilon M$}
\State $\S \gets \{i: \nabla_i f(\x) > 0 \textnormal{ and } \nabla_i f(\y) < 0\}$
\ForAll {$i \notin \mathcal{S}$ : $\x_i < \y_i$ }
\If {$\nabla f(\x)_i \leq 0$}
\State $\y_i \gets \x_i$ 
\Else
\State $\x_i \gets \y_i$
\EndIf
\EndFor
\State $\Delta \x \gets \vec{0}, \Delta \y \gets \vec{0}$
\ForAll{$i \in \mathcal{S}$}
\State $(\Delta \x)_i = \frac{\nabla_i f(\x)}{\nabla_i f(\x) - \nabla_i f(\y)}$
\State $(\Delta \y)_i = \frac{\nabla_i f(\y)}{\nabla_i f(\x) - \nabla_i f(\y)}$
\EndFor
\State Line search for largest $\eta > 0$ such that  
\begin{align*}
f(\x + \eta \Delta \x)-f(\x) + f(\y+\eta \Delta \y)-f(\y)
\\
\geq (1-\epsilon) \left(\langle \nabla f(\x), \eta \Delta\x \rangle + \langle \nabla f(\y), \eta \Delta\y \rangle \right)
\end{align*} \label{line:line-search} 
\State $\x \gets \x + \eta \Delta \x$, $\y \gets \y + \eta \Delta \y$
\EndWhile
\State \Return $\arg\max \{ f(\x), f(\y) \}$
\EndProcedure
\end{algorithmic}

\caption{Description of our parallel algorithm for non-monotone submodular maximization. The line search is performed approximately using $O(1)$ parallel rounds, each with $O(\log(1/\eps)/\eps)$ queries.}
\label{fig:algo}
\end{figure}

\smallskip
{\bf Implementation of the line search.}
We perform each line search approximately as follows. Fix an iteration of the algorithm and let $\eta^*$ be the optimal step for the line search on line~\ref{line:line-search}. We first check whether the step size $\eta = \eps^{O(1)}$ meets the condition, where the $O(1)$ is a sufficiently large constant (a constant of $4$ will suffice for us). If this step size does not meet the condition, then we use this step size and finish the search. In the following, we assume that this step size meets the condition. We show that, for any constant $c \geq 1$, we can find the minimum power of $(1 - \eps^c)$ that exceeds $\eta^*$ using $O(c)$ parallel rounds, each of which performs $O(\log(1/\eps)/\eps)$ parallel queries. The first round finds the minimum power of $(1 - \eps)$ that exceeds $\eta^*$, and subsequent rounds refine the approximation. In the first round, we try $\eta = 1, (1 - \eps), (1 - \eps)^2, \dots, \eps^{O(1)}$ in parallel and take the minimum $\eta$ that fails the condition. We iteratively refine this approximation so that, after $j$ rounds, we have an integer $i_j$ such that $\eta^* \in [(1 - \eps^{j})^{i_j}, (1 - \eps^j)^{(i_j - 1)})$. Given $i_{j}$, we find the integer $i_{j + 1}$ by trying all the powers of $1-\eps^{j+1}$ in the range $[(1 - \eps^{j})^{i_j}, (1 - \eps^j)^{(i_j - 1)})$. 

It suffices for our purposes to find an approximate line step $\eta = (1 - \eps^4)^{i}$, where $i$ is such that $\eta^* \in [(1 - \eps^4)^{i}, (1 - \eps^4)^{(i - 1)})$. We can obtain such an approximation using $4$ parallel rounds, each of which performs $O(\log(1/\eps)/\eps)$ queries. There are other obvious tradeoffs between rounds and number of queries and we only exhibit one possible choice.

For simplicity, in the remainder of the analysis we assume that the line search is performed exactly, as the total error incurred from the approximate line searches can be bounded by $O(\eps M)$. As shown in Theorem~\ref{thm:iterations}, the relevant gradients have $\ell_1$-norm at most $M / \eps$ and the number of iterations is $O(\log(1/\eps)/\eps)$. This allows us to extend the analysis at a loss in the approximation of $O(\eps^3 M)$ per iteration, and thus $O(\eps M)$ overall. 

\smallskip
{\bf Analysis of the approximation guarantee.}
We proceed similarly to the analysis from Section~\ref{sec:contdoublegreedy}. In the following, we use $\xt$ and $\yt$ to denote the vectors $\x$ and $\y$ at the beginning of iteration $t$ of the algorithm, and similarly for the other quantities of interest. We let $\pt = \textnormal{Proj}_{[\xt, \yt]} \xopt$. We first show that the algorithm maintains the following invariant:

\begin{lemma}
\label{lem:disc-invariant}
The algorithm maintains the invariant
\[ \frac{1}{2} \left(
f(\x^{(t+1)}) - f(\xt) + f(\y^{(t+1)}) - f(\yt)
\right)
+ (1-\epsilon) \left( f(\ptp)-f(\pt) \right)
\geq 0
\]
\end{lemma}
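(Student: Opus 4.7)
The plan is to split one iteration of the algorithm into a \emph{snap} phase (the loop that equalizes coordinates $i \notin \mathcal{S}$) and a \emph{step} phase (the scaled update on $\mathcal{S}$), and prove the invariant separately for each. Write $\x^{1/2}, \y^{1/2}, \p^{1/2}$ for the intermediate state at the end of the snap phase, with $\p^{1/2} = \textnormal{Proj}_{[\x^{1/2}, \y^{1/2}]} \xopt$.

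For the snap phase I would proceed coordinate by coordinate. Consider an $i \notin \mathcal{S}$ with $x_i^t < y_i^t$ and $\nabla_i f(\xt) \le 0$; the algorithm drops $y_i$ to $x_i$. By DR-submodularity $\nabla_i f(\z) \le \nabla_i f(\xt) \le 0$ at every $\z \ge \xt$, and throughout the snap phase both $\x$ and $\p$ remain coordinate-wise above $\xt$. Hence lowering $y_i$ only increases $f(\y)$, while lowering $p_i$ (from its old value $p_i^t \in [x_i^t, y_i^t]$ down to $x_i^t$) does not decrease $f(\p)$; $f(\x)$ is untouched. The symmetric argument handles the case $\nabla_i f(\yt) \ge 0$, in which $x_i$ is raised to $y_i$. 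Summing over all such coordinates gives $\tfrac{1}{2}(f(\x^{1/2}) - f(\xt) + f(\y^{1/2}) - f(\yt)) + (1-\eps)(f(\p^{1/2}) - f(\pt)) \ge 0$.

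For the step phase, the line search condition immediately yields the gain bound $\tfrac{1}{2}(f(\xtp) - f(\x^{1/2}) + f(\ytp) - f(\y^{1/2})) \ge \tfrac{(1-\eps)\eta}{2}(\langle \nabla f(\x^{1/2}), \Delta \x \rangle + \langle \nabla f(\y^{1/2}), \Delta \y \rangle)$. The harder, and in my view the main, obstacle is to lower bound the projection drop $f(\ptp) - f(\p^{1/2})$, because $\p$ moves both up and down across different $\mathcal{S}$-coordinates and Lemma~\ref{lem:concavity} does not apply directly to the pair $(\p^{1/2}, \ptp)$. My plan is to split $\mathcal{S}$ into $\mathcal{S}_+$ (coordinates where $p_i$ moves up, to $x_i^{t+1}$) and $\mathcal{S}_-$ (coordinates where $p_i$ moves down, to $y_i^{t+1}$), and introduce the intermediate corner $\p^{\mathrm{mid}}$ obtained from $\p^{1/2}$ by applying only the $\mathcal{S}_-$ decreases. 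Then $\p^{\mathrm{mid}} \le \p^{1/2}$ and $\p^{\mathrm{mid}} \le \ptp$, so two applications of Lemma~\ref{lem:concavity} give
\[ f(\ptp) - f(\p^{1/2}) \;\ge\; \langle \nabla f(\ptp), \ptp - \p^{\mathrm{mid}} \rangle - \langle \nabla f(\p^{\mathrm{mid}}), \p^{1/2} - \p^{\mathrm{mid}} \rangle. \]
For $i \in \mathcal{S}_+$ I would use DR-submodularity ($\ptp \le \y^{1/2}$ implies $\nabla_i f(\ptp) \ge \nabla_i f(\y^{1/2})$) together with the bound $(\ptp - \p^{\mathrm{mid}})_i \in [0, \eta \Delta x_i]$ and a short case split on the sign of $\nabla_i f(\ptp)$ to show that the $i$-th contribution is at least $\eta \nabla_i f(\x^{1/2}) \nabla_i f(\y^{1/2}) / D_i$, where $D_i := \nabla_i f(\x^{1/2}) - \nabla_i f(\y^{1/2}) > 0$. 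The symmetric argument on $i \in \mathcal{S}_-$ (using $\p^{\mathrm{mid}} \ge \x^{1/2}$) yields the same per-coordinate expression; the remaining $\mathcal{S}$-coordinates contribute zero, but this negative term can be added freely to weaken the bound, producing $f(\ptp) - f(\p^{1/2}) \ge \eta \sum_{i \in \mathcal{S}} \nabla_i f(\x^{1/2}) \nabla_i f(\y^{1/2}) / D_i$.

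Adding the gain bound to $(1-\eps)$ times the loss bound, the prefactor $(1-\eps)\eta$ factors out and the step-phase inequality reduces to the coordinate-wise identity
\[ \tfrac{1}{2}\bigl(\nabla_i f(\x^{1/2}) \Delta x_i + \nabla_i f(\y^{1/2}) \Delta y_i\bigr) + \tfrac{\nabla_i f(\x^{1/2}) \nabla_i f(\y^{1/2})}{D_i} \;=\; \tfrac{(\nabla_i f(\x^{1/2}) + \nabla_i f(\y^{1/2}))^2}{2 D_i} \;\ge\; 0, \]
which is exactly the algebraic step that drove Lemma~\ref{lem:dg-inv}. Combining the snap-phase and step-phase inequalities gives the claimed invariant.
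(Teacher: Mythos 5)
Your proof is correct and follows the same high-level structure as the paper's: bound the gain in $\tfrac{1}{2}(f(\x)+f(\y))$ via the line-search condition, lower-bound the drop in $f(\p)$ coordinate-by-coordinate using DR-submodularity, and combine to produce the per-coordinate square $(\nabla_i f(\x) + \nabla_i f(\y))^2 / D_i \geq 0$, exactly as in Lemma~\ref{lem:dg-inv}. Two packaging differences are worth noting. First, you explicitly separate the snap phase and verify that each coordinate-equalization step can only increase $f(\x)$, $f(\y)$, and $f(\p)$; the paper's proof is silent about this phase and implicitly relies on it being harmless, so your version is a cleaner and more complete account. Second, for the projection drop you introduce the corner point $\p^{\mathrm{mid}}$ and apply Lemma~\ref{lem:concavity} twice, whereas the paper writes $f(\ptp) - f(\pt)$ as the line integral $\int_0^1 \langle \nabla f((1-\alpha)\pt + \alpha\ptp), \ptp - \pt\rangle\,d\alpha$ and bounds each coordinate's integrand directly using $\xt \leq \g_\alpha \leq \yt$. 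The integral route avoids the auxiliary point and lets positive and negative $\Delta p_i$ be handled in one sweep, while your corner-point route stays within the toolkit of the stated concavity lemma; both yield the identical per-coordinate bound $\eta\,\nabla_i f(\x)\nabla_i f(\y)/D_i$ and then the same algebra closes the argument.
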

\begin{proof}
By the choice of $\eta$, we have
\[
  \frac{1}{2} \left( f(\x^{(t+1)}) - f(\xt) + f(\y^{(t+1)}) - f(\yt) \right)
  \geq \frac{1 - \eps}{2} \eta \left(\langle \nabla f(\xt), \Delta \xt \rangle + \langle \nabla f(\yt), \Delta\yt \rangle \right)
\]
We now lower bound $f(\ptp) - f(\pt)$. We have
\begin{align*}
f(\ptp)-f(\pt) = \int_0^1 \langle \nabla f( (1-\alpha) \pt + \alpha \ptp ), \ptp-\pt \rangle d\alpha
\end{align*}
We write in shorthand $\g_\alpha = (1-\alpha) \pt + \alpha \ptp $, and $\Delta \pt = \ptp - \pt$. Now consider the coordinates of $\Delta \pt$ and partition them into two sets, one where they are positive, and one where they are negative. For the former, we lower bound the contribution of the integral
\begin{align*}
\int_0^1 \nabla_i f(\g_\alpha) (\Delta \pt)_i d\alpha \geq \nabla_i f(\yt) \cdot (\Delta \xt)_i \cdot \eta = \frac{ \nabla_i f(\xt) \nabla_i  f(\yt)}{\nabla_i f(\xt) - \nabla_i f(\yt)} \cdot \eta
\end{align*}
For the latter we similarly write
\begin{align*}
\int_0^1 \nabla_i f(\g_\alpha) (\Delta \pt)_i d\alpha \geq \nabla_i f(\xt) \cdot (\Delta \yt)_i \cdot \eta = \frac{ \nabla_i f(\xt) \nabla_i  f(\yt)}{\nabla_i f(\xt) - \nabla_i f(\yt)} \cdot \eta
\end{align*}
Therefore
\[ f(\ptp) - f(\pt) \geq \eta \sum_{i \in \mathcal{S}^{(t)}}  \frac{ \nabla_i f(\xt) \nabla_i  f(\yt)}{\nabla_i f(\xt) - \nabla_i f(\yt)} \]
It follows that
\begin{align*}
&\frac{1}{2} \left(f(\x^{(t+1)}) - f(\xt) + f(\y^{(t+1)}) - f(\yt)\right)
+ (1-\epsilon) \left( f(\ptp)-f(\pt) \right)\\
&\quad \geq \frac{(1 - \eps) \eta}{2} \sum_{i \in \mathcal{S}^{(t)}} \frac{(\nabla_i f(\xt) + \nabla_i f(\yt))^2}{\nabla_i f(\xt) - \nabla_i f(\yt)}\\
&\quad \geq 0
\end{align*}
\end{proof} 

\begin{theorem}
Given a guess $M$ for the optimal value, the algorithm described in Figure~\ref{fig:algo} returns a point $\vec{x}$ satisfying 
$f(\x) \geq \left(\frac{1}{2}-O(\epsilon)\right) f(\xopt) - \epsilon M$.
\end{theorem}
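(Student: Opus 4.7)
The plan is to discretize the continuous argument of Lemma~\ref{lem:bound-alpha}. I would first telescope the per-iteration invariant of Lemma~\ref{lem:disc-invariant} across all iterations $t=0,1,\ldots,T-1$ (where $T$ is the last iteration), obtaining
\[
\tfrac{1}{2}\bigl(f(\x^{(T)})+f(\y^{(T)})-f(\xz)-f(\yz)\bigr)+(1-\eps)\bigl(f(\p^{(T)})-f(\pz)\bigr)\ge 0.
\]
Before doing so, I would verify that the preliminary updates of non-$\mathcal{S}$ coordinates (the \texttt{If/Else} block in Figure~\ref{fig:algo}) also preserve this invariant: in either case, DR-submodularity together with the sign of the relevant gradient makes all three of $f(\x)$, $f(\y)$, and $f(\p)$ weakly increase, so the inequality can only improve.

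Next I would use the termination condition $\langle\nabla f(\x^{(T)})-\nabla f(\y^{(T)}),\y^{(T)}-\x^{(T)}\rangle<\eps M$ to argue that $f(\p^{(T)})$ is close to the average $\tfrac{1}{2}(f(\x^{(T)})+f(\y^{(T)}))$. Applying Lemma~\ref{lem:concavity} twice to the chain $\x^{(T)}\le\p^{(T)}\le\y^{(T)}$ and summing gives
\[
2f(\p^{(T)})-f(\x^{(T)})-f(\y^{(T)})\le\langle\nabla f(\x^{(T)}),\p^{(T)}-\x^{(T)}\rangle-\langle\nabla f(\y^{(T)}),\y^{(T)}-\p^{(T)}\rangle.
\]
Coordinates with $x_i=y_i$ contribute zero; on the remaining coordinates we have $\nabla_i f(\x)>0$ and $\nabla_i f(\y)<0$, and since $\x\le\p\le\y$ we can bound each $p_i-x_i$ and $y_i-p_i$ from above by $y_i-x_i$, yielding the upper bound $\langle\nabla f(\x^{(T)})-\nabla f(\y^{(T)}),\y^{(T)}-\x^{(T)}\rangle<\eps M$. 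Substituting this into the telescoped inequality, dropping the non-negative term $\tfrac{1}{2}(f(\xz)+f(\yz))$, and solving linearly for $\tfrac{1}{2}(f(\x^{(T)})+f(\y^{(T)}))$ gives
\[
\tfrac{1}{2}\bigl(f(\x^{(T)})+f(\y^{(T)}))\ge\frac{1-\eps}{2-\eps}\,f(\pz)-O(\eps M)=\left(\tfrac{1}{2}-O(\eps)\right)f(\pz)-O(\eps M).
\]

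The main obstacle is relating $f(\pz)=f((\xopt\vee\eps\ones)\wedge(1-\eps)\ones)$ to $f(\xopt)$: the lower clip is handled directly by Lemma~\ref{lem:x-or-opt} ($f(\xopt\vee\eps\ones)\ge(1-\eps)f(\xopt)$), but a naive bound for the upper cap via concavity and the gradient bound $\|\nabla f(\eps\ones)^+\|_1\le M/\eps$ leaves an additive error of $\Theta(M)$, which is far too weak. The clean workaround is to apply Lemma~\ref{lem:x-or-opt} to the reflected function $g(\vec z):=f(\vec 1-\vec z)$, which is still non-negative and DR-submodular; this yields the dual statement $f(\vec u\wedge(1-\eps)\ones)\ge(1-\eps)f(\vec u)$ for every $\vec u\in[0,1]^n$. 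Composing the two bounds gives $f(\pz)\ge(1-\eps)^2 f(\xopt)\ge(1-O(\eps))f(\xopt)$, so altogether $\tfrac{1}{2}(f(\x^{(T)})+f(\y^{(T)}))\ge(\tfrac{1}{2}-O(\eps))f(\xopt)-O(\eps M)$. The algorithm outputs $\arg\max\{f(\x),f(\y)\}\ge\tfrac{1}{2}(f(\x)+f(\y))$, which gives the claimed bound; the approximate line search contributes only an additional $O(\eps^3 M)$ per iteration, aggregating to $O(\eps M)$ over the $O(\log(1/\eps)/\eps)$ iterations, as asserted in the discussion surrounding Lemma~\ref{lem:disc-invariant}.
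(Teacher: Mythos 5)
Your proof is correct and follows the same route as the paper: telescope Lemma~\ref{lem:disc-invariant} over the iterations, use the termination condition together with concavity (Lemma~\ref{lem:concavity}) to bound the residual gap between $f(\p^{(T)})$ and $\tfrac{1}{2}\bigl(f(\x^{(T)})+f(\y^{(T)})\bigr)$, drop the non-negative $f(\xz),f(\yz)$ terms, solve the resulting linear inequality, and finally compare $f(\pz)$ with $f(\xopt)$ via two applications of Lemma~\ref{lem:x-or-opt}. The only substantive differences are places where you are more explicit than the paper: you check that the preliminary non-$\mathcal{S}$ coordinate updates also respect the invariant (the paper's Lemma~\ref{lem:disc-invariant} only treats the $\eta$-step), you bound $f(\p^{(T)})$ directly against the average rather than routing both $f(\y^{(T)})$ and $f(\p^{(T)})$ through $f(\x^{(T)})$ as the paper does, and you spell out the reflection $g(\vec{z})=f(\ones-\vec{z})$ behind the paper's terse phrase ``applying Lemma~\ref{lem:x-or-opt} once forward and once backward.'' These are expository refinements, not a different argument.
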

\begin{proof}
Let $\vec{x}^{(T)}$, and $\vec{y}^{(T)}$ be the last iterates produced by the algorithm. Using the invariant from Lemma~\ref{lem:disc-invariant} and summing up over all iterates we obtain that
\[
\frac{1}{2}\left( f(\x^{(T)}) - f(\xz) + f(\y^{(T)}) - f(\yz) \right) + (1-\epsilon) \left( f(\vec{p}^{(T)}) - f(\pz) \right) \geq 0
\]
Since the function is non-negative, we obtain
\[
\frac{1}{2} \left( f(\x^{(T)}) + f(\y^{(T)}) \right) \geq (1 - \eps) (f(\pz) - f(\p^{(T)})) 
\]
We now show that the stopping condition of the while loop implies that any point inside the box $[\x^{(T)}, \y^{(T)}]$ can further increase the value by at most $\epsilon M$ over the best point on the boundary of the box. Indeed, for any point $\z$ such that $\x^{(T)} \leq \z \leq \y^{(T)}$, we have
\begin{align*}
f(\z) &\leq f(\x^{(T)}) + \langle \nabla f(\x^{(T)}), \z - \x^{(T)} \rangle\\
&\leq f(\x^{(T)}) + \langle \nabla f(\x^{(T)}), \y^{(T)} - \x^{(T)} \rangle\\
&\leq f(\x^{(T)}) + \eps M
\end{align*}
Note that, in particular, $f(\y^{(T)}) \leq f(\x^{(T)}) + \eps M$. By combining with the inequality we obtained from the invariant,
\begin{align*}
&\frac{1}{2} \left( f(\x^{(T)}) + f(\y^{(T)}) \right) \geq (1 - \eps) (f(\pz) - f(\p^{(T)}))\\ 
&\Rightarrow f(\x^{(T)}) + \frac{1}{2} \eps M \geq (1 - \eps) (f(\pz) - f(\x^{(T)}) - \eps M)\\
&\Rightarrow f(\x^{(T)}) \geq \frac{1}{2} (1 - \eps) f(\pz) - \eps M
\end{align*}
Applying Lemma~\ref{lem:x-or-opt} twice (once forward and once backward) we see that $f(\pz) \geq (1-\epsilon)^2 f(\xopt)$, and the theorem follows.
\end{proof}

\smallskip
{\bf Analysis of the number of iterations.} We now show that the algorithm terminates in $O(\log(1/\epsilon)/\epsilon)$ iterations.

\begin{theorem}
\label{thm:iterations}
The algorithm terminates after $O(\log(1/\epsilon)/\epsilon)$ iterations of the main loop.
\end{theorem}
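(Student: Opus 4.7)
The plan is to track the auxiliary potential $\tilde{\Phi}^{(t)} := \sum_{i \in \mathcal{S}^{(t)}}(\nabla_i f(\xt) - \nabla_i f(\yt))$, which dominates the stopping-condition quantity $\langle \nabla f(\xt) - \nabla f(\yt), \yt - \xt\rangle$ since every coordinate of $\yt - \xt$ lies in $[0,1]$ and the summands are nonnegative on $\mathcal{S}^{(t)}$ by DR-submodularity. I will show that $\tilde{\Phi}^{(0)} = O(M/\epsilon)$, that each iteration shrinks $\tilde{\Phi}$ by a factor $1 - \Omega(\epsilon)$, and conclude that after $O(\log(1/\epsilon)/\epsilon)$ iterations $\tilde{\Phi} < \epsilon M$ and the while loop exits. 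For the initial bound, DR-submodularity at $\xz = \epsilon \ones$ gives $\nabla f(s \vec{1}_T) \geq \nabla f(\xz)$ for all $s \leq \epsilon$ and $T \subseteq [n]$, whence
\[
M \;\geq\; f(\epsilon \vec{1}_T) - f(\vec{0}) \;=\; \int_0^\epsilon \langle \nabla f(s \vec{1}_T), \vec{1}_T\rangle\, ds \;\geq\; \epsilon\, \langle \nabla f(\xz), \vec{1}_T\rangle.
\]
Choosing $T = \{i : \nabla_i f(\xz) > 0\}$ yields $\|\nabla f(\xz)^+\|_1 \leq M/\epsilon$; the mirror argument at $\yz = (1-\epsilon)\ones$ (comparing $\yz$ against $\yz - \epsilon \vec{1}_T$) gives $\|\nabla f(\yz)^-\|_1 \leq M/\epsilon$, so $\tilde{\Phi}^{(0)} \leq 2M/\epsilon$.

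For the per-iteration decay, denote $g_i, h_i, g'_i, h'_i$ for $\nabla_i f$ at $\xt, \yt, \xtp, \ytp$. DR-submodularity gives $g'_i \leq g_i$, $h'_i \geq h_i$, and $g'_i \geq h'_i$ (since $\xtp \leq \ytp$), so $g'_i - h'_i \geq 0$. Because the preprocessing step collapses $\xt_i = \yt_i$ for $i \notin \mathcal{S}^{(t)}$, we have $\mathcal{S}^{(t+1)} \subseteq \mathcal{S}^{(t)}$ and thus $\tilde{\Phi}^{(t+1)} \leq \sum_{i \in \mathcal{S}^{(t)}}(g'_i - h'_i)$. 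At an interior tight step of the line search, combining the equality in the growth condition with Lemma~\ref{lem:concavity} yields $\sum_i g'_i (\Delta \xt)_i + h'_i (\Delta \yt)_i \leq (1-\epsilon) L^{(t)}$, where $L^{(t)} := \sum_{i \in \mathcal{S}^{(t)}}(g_i^2 + h_i^2)/(g_i - h_i)$. Plugging in the algorithm's choice of $\Delta$ and rearranging,
\[
\epsilon L^{(t)} \;\leq\; \sum_{i \in \mathcal{S}^{(t)}} \frac{g_i(g_i - g'_i) + h_i(h_i - h'_i)}{g_i - h_i} \;\leq\; \sum_{i \in \mathcal{S}^{(t)}} \bigl( (g_i - g'_i) + (h'_i - h_i) \bigr),
\]
where the last step exploits $0 < g_i \leq g_i - h_i$ and $0 < -h_i \leq g_i - h_i$ to drop the denominators. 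Since $g_i^2 + h_i^2 \geq (g_i - h_i)^2/2$ implies $L^{(t)} \geq \tilde{\Phi}^{(t)}/2$, and the right-hand side above equals $\tilde{\Phi}^{(t)} - \sum_{i \in \mathcal{S}^{(t)}}(g'_i - h'_i) \geq \tilde{\Phi}^{(t)} - \tilde{\Phi}^{(t+1)}$, we conclude $\tilde{\Phi}^{(t+1)} \leq \tilde{\Phi}^{(t)} - \epsilon L^{(t)} \leq (1 - \epsilon/2)\tilde{\Phi}^{(t)}$. Chaining, $\tilde{\Phi}^{(T)} \leq (1-\epsilon/2)^T \cdot 2M/\epsilon$ drops below $\epsilon M$ for $T = O(\log(1/\epsilon)/\epsilon)$.

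The main obstacle is the boundary case of the line search: when the step is truncated at $\eta = \min_{i \in \mathcal{S}^{(t)}}(\yt_i - \xt_i)$ by the feasibility requirement $\xtp \leq \ytp$ rather than by the tightness of the growth condition. In that regime the equality used to extract the $\epsilon L^{(t)}$ slack may not hold, but each such boundary iteration closes at least one coordinate out of $\mathcal{S}$; amortizing these events against the interior iterations (or mildly adjusting the line search so that tightness is always attained strictly inside the box) should preserve the stated $O(\log(1/\epsilon)/\epsilon)$ bound.
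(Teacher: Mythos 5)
Your proposal reproduces the paper's argument essentially step by step: same potential $\Phi^{(t)}=\langle\nabla f(\xt)-\nabla f(\yt),\one{\mathcal{S}^{(t)}}\rangle$, same initial bound on $\|\nabla f(\xz)^+\|_1$ and $\|\nabla f(\yz)^-\|_1$ via DR-submodularity and the fundamental theorem of calculus, the same chain (line-search tightness plus Lemma~\ref{lem:concavity} to pass to gradients at $\xtp,\ytp$, then $a^2+b^2\ge(a-(-b))^2/2$ and the $[0,1]$ bounds on $g_i/(g_i-h_i)$ and $-h_i/(g_i-h_i)$ to drop denominators), and the same terminal bound tying $\Phi$ to the while-loop condition. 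The only substantive difference is your explicit worry about the feasibility boundary of the line search.

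On that worry: your amortization idea, as stated, would not give the claimed bound — there can be up to $n$ coordinates that are "closed" by boundary-limited steps, and $n$ need not be $O(\log(1/\eps)/\eps)$, so charging one boundary step per closed coordinate could inflate the iteration count far beyond the target. The paper does not explicitly treat this case either; it implicitly takes the view that the line search's $\eta^*$ is determined by the tightness condition alone (and handles inexact search only through the additive $O(\eps M)$ error budget). Your second suggestion — constrain the line search to stop at $\min_{i\in\mathcal{S}^{(t)}}(\yt_i-\xt_i)$ and argue tightness is generically attained strictly inside — is the correct direction; if a step is boundary-limited, note that such an iteration can only be charged to the fact that the bounded-gradient condition still makes $\eta^*$ small enough, so this situation can be shown not to recur too often, but spelling that out would be a genuine addition beyond what the paper records. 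Aside from that loose end (which is shared with the paper), the proof is correct and is the paper's proof.
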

\begin{proof}
The argument is based on analyzing a potential function
\[ \Phi^{(t)} = \langle \nabla f(\xt) - \nabla f(\yt), \one{\mathcal{S}^{(t)}} \rangle \]
We show that the stopping condition for line search guarantees that this function must decrease fast.

The stopping condition for line search guarantees that 
\begin{align*}
&\langle  \nabla f(\xtp), \Delta \xt \rangle + \langle \nabla f(\ytp), \Delta \yt \rangle \\
& \leq (1-\epsilon)
\left(
\langle  \nabla f(\xt), \Delta \xt \rangle +  \langle  \nabla f(\yt), \Delta \yt \rangle
\right)
\end{align*}
By rearranging, we obtain
\begin{align*}
& \langle \nabla f(\xt) - \nabla f(\xtp), \Delta \xt \rangle + \langle \nabla f(\ytp) - \nabla f(\yt), - \Delta \yt \rangle\\
&\quad \geq \eps \langle \nabla f(\xt), \Delta \xt \rangle + \eps \langle - \nabla f(\yt), - \Delta \yt \rangle
\end{align*}
By plugging in the update rule, we obtain
\begin{align*}
& \sum_{i \in \mathcal{S}^{(t)}} (\nabla_i f(\xt) - \nabla_i f(\xtp)) \cdot \frac{\nabla_i f(\xt)}{\nabla_i f(\xt) - \nabla_i f(\yt)}\\
&\quad + \sum_{i \in \mathcal{S}^{(t)}} (\nabla_i f(\ytp) - \nabla_i f(\yt)) \cdot \left( - \frac{\nabla_i f(\yt)}{\nabla_i f(\xt) - \nabla_i f(\yt)} \right)\\
&\quad \geq \eps \sum_{i \in \mathcal{S}^{(t)}} \frac{(\nabla_i f(\xt))^2 + (\nabla_i f(\yt))^2}{\nabla_i f(\xt) - \nabla_i f(\yt)}
\end{align*}

Letting $a = \nabla_i f(\xt)$, $b = - \nabla_i f(\yt)$, and using the inequality $a^2 + b^2 \geq (a + b)^2 / 2$, we obtain
\[ \eps \sum_{i \in \mathcal{S}^{(t)}} \frac{(\nabla_i f(\xt))^2 + (\nabla_i f(\yt))^2}{\nabla_i f(\xt) - \nabla_i f(\yt)} \geq \frac{\eps}{2} \sum_{i \in \mathcal{S}^{(t)}} (\nabla_i f(\xt) - \nabla_i f(\yt)) \] 

Using that $\nabla_i f(\xt) - \nabla_i f(\xtp) \geq 0$, $\frac{\nabla_i f(\xt)}{\nabla_i f(\xt) - \nabla_i f(\yt)} \in [0, 1]$, $\nabla_i f(\ytp) - \nabla_i f(\yt) \geq 0$, and $- \frac{\nabla_i f(\yt)}{\nabla_i f(\xt) - \nabla_i f(\yt)} \in [0, 1]$, we obtain     
\begin{align*}
&\sum_{i \in \mathcal{S}^{(t)}} (\nabla_i f(\xt) - \nabla_i f(\xtp)) + \sum_{i \in \mathcal{S}^{(t)}} (\nabla_i f(\ytp) - \nabla_i f(\yt)) \\  
&\quad \geq \sum_{i \in \mathcal{S}^{(t)}} (\nabla_i f(\xt) - \nabla_i f(\xtp)) \cdot \frac{\nabla_i f(\xt)}{\nabla_i f(\xt) - \nabla_i f(\yt)}\\
&\qquad + \sum_{i \in \mathcal{S}^{(t)}} (\nabla_i f(\ytp) - \nabla_i f(\yt)) \cdot \left( - \frac{\nabla_i f(\yt)}{\nabla_i f(\xt) - \nabla_i f(\yt)} \right)\\
\end{align*}

Therefore
\[
\sum_{i \in \mathcal{S}^{(t)}} (\nabla_i f(\xt) - \nabla_i f(\xtp)) + \sum_{i \in \mathcal{S}^{(t)}} (\nabla_i f(\ytp) - \nabla_i f(\yt))  
\geq \frac{\eps}{2} \sum_{i \in \mathcal{S}^{(t)}} (\nabla_i f(\xt) - \nabla_i f(\yt)) 
\]
By rearranging, we obtain
\[ \langle \nabla f(\xtp) - \nabla f(\ytp), \one{\mathcal{S}^{(t)}} \rangle\leq \left( 1 - \frac{\eps}{2}\right) \langle \nabla f(\xt) - \nabla f(\yt), \one{\mathcal{S}^{(t)}}\rangle \]
This shows that the potential function drops multiplicatively by $1-O(\epsilon)$ per iteration.

Finally, we need to understand the range of the potential function considered during the execution of the algorithm. More precisely, we argue that the potential only decreases from $M/\eps$ to $\eps M$.
First we see that the points at which we start the process $\xz = \epsilon \ones{}$, $\yz = (1-\epsilon)\ones{}$ do not have large gradients. This is because $f(0)\ge 0$, $f(\eps \vec{1}_T)\le M~\forall T\subseteq V$ and the diminishing return property, we have $$\sum_{e\in T} (\nabla f(\eps \vec{1}))_e \le M/\eps\,$$ which implies $\|\nabla f(\xz)\vee \vec{0}\|_1 \le M/\eps$. A similar argument holds for $\y$.

Finally, we show that the stopping condition of the while loop guarantees that the potential function never decreases below $\epsilon M$. Indeed, for every iteration $t$, we have
\[ \langle \nabla f(\xt) - \nabla f(\yt), \one{\mathcal{S}^{(t)}} \rangle \geq \langle \nabla f(\xt) - \nabla f(\yt), \yt - \xt \rangle \geq \eps M,\]
since, for all $i \in \mathcal{S}^{(t)}$, we have $\nabla_i f(\xt) - \nabla_i f(\yt) \geq 0$ and $\yt_i - \xt_i \in [0, 1]$.

Hence the number of multiplicative decreases of the potential function is at most $O(\log(1/\epsilon)/\epsilon)$.
\end{proof}

\bibliographystyle{abbrv}
\bibliography{submodular}

\begin{thebibliography}{10}

\bibitem{BBY18}
E.~Balkanski, A.~Breuer, and Y.~Singer.
\newblock Non-monotone submodular maximization in exponentially fewer
  iterations.
\newblock {\em arXiv preprint arXiv:1807.11462}, 2018.

\bibitem{BRS18}
E.~Balkanski, A.~Rubinstein, and Y.~Singer.
\newblock An exponential speedup in parallel running time for submodular
  maximization without loss in approximation.
\newblock {\em CoRR}, abs/1804.06355, 2018.

\bibitem{balkanski2018optimal}
E.~Balkanski, A.~Rubinstein, and Y.~Singer.
\newblock An optimal approximation for submodular maximization under a matroid
  constraint in the adaptive complexity model.
\newblock {\em arXiv preprint arXiv:1811.03093}, 2018.

\bibitem{BS18}
E.~Balkanski and Y.~Singer.
\newblock The adaptive complexity of maximizing a submodular function.
\newblock In {\em ACM Symposium on Theory of Computing (STOC)}, 2018.

\bibitem{bian2018optimal}
A.~Bian, J.~M. Buhmann, and A.~Krause.
\newblock Optimal dr-submodular maximization and applications to provable mean
  field inference.
\newblock {\em arXiv preprint arXiv:1805.07482}, 2018.

\bibitem{bian2017continuous}
A.~Bian, K.~Levy, A.~Krause, and J.~M. Buhmann.
\newblock Continuous dr-submodular maximization: Structure and algorithms.
\newblock In {\em Advances in Neural Information Processing Systems}, pages
  486--496, 2017.

\bibitem{bian2016guaranteed}
A.~A. Bian, B.~Mirzasoleiman, J.~M. Buhmann, and A.~Krause.
\newblock Guaranteed non-convex optimization: Submodular maximization over
  continuous domains.
\newblock {\em arXiv preprint arXiv:1606.05615}, 2016.

\bibitem{buchbinder2015tight}
N.~Buchbinder, M.~Feldman, J.~Seffi, and R.~Schwartz.
\newblock A tight linear time (1/2)-approximation for unconstrained submodular
  maximization.
\newblock {\em SIAM Journal on Computing}, 44(5):1384--1402, 2015.

\bibitem{Calinescu2011}
G.~Calinescu, C.~Chekuri, M.~P\'{a}l, and J.~Vondr\'{a}k.
\newblock Maximizing a submodular set function subject to a matroid constraint.
\newblock {\em SIAM Journal on Computing}, 40(6):1740--1766, 2011.

\bibitem{ChekuriJV15}
C.~Chekuri, T.~S. Jayram, and J.~Vondr{\'{a}}k.
\newblock On multiplicative weight updates for concave and submodular function
  maximization.
\newblock In {\em Conference on Innovations in Theoretical Computer Science
  (ITCS)}, 2015.

\bibitem{CQ18}
C.~Chekuri and K.~Quanrud.
\newblock Submodular function maximization in parallel via the multilinear
  relaxation.
\newblock {\em arXiv preprint arXiv:1807.08678}, 2018.

\bibitem{CFK18}
L.~Chen, M.~Feldman, and A.~Karbasi.
\newblock Unconstrained submodular maximization with constant adaptive
  complexity.
\newblock {\em arXiv preprint arXiv:1811.06603}.

\bibitem{EN18}
A.~Ene and H.~L. Nguyen.
\newblock Submodular maximization with nearly-optimal approximation and
  adaptivity in nearly-linear time.
\newblock {\em arXiv preprint arXiv:1804.05379}, 2018.

\bibitem{ENV18}
A.~Ene, H.~L. Nguyen, and A.~Vladu.
\newblock Submodular maximization with matroid and packing constraints in
  parallel.
\newblock {\em arXiv preprint arXiv:1808.09987}.

\bibitem{FMZ18}
M.~Fahrbach, V.~Mirrokni, and M.~Zadimoghaddam.
\newblock Submodular maximization with optimal approximation, adaptivity and
  query complexity.
\newblock {\em arXiv preprint arXiv:1807.07889}, 2018.

\bibitem{feige2011maximizing}
U.~Feige, V.~S. Mirrokni, and J.~Vondrak.
\newblock Maximizing non-monotone submodular functions.
\newblock {\em SIAM Journal on Computing}, 40(4):1133--1153, 2011.

\bibitem{gillenwater2012near}
J.~Gillenwater, A.~Kulesza, and B.~Taskar.
\newblock Near-optimal map inference for determinantal point processes.
\newblock In {\em Advances in Neural Information Processing Systems (NIPS)},
  pages 2735--2743, 2012.

\bibitem{ito2016large}
S.~Ito and R.~Fujimaki.
\newblock Large-scale price optimization via network flow.
\newblock In {\em Advances in Neural Information Processing Systems (NIPS)},
  pages 3855--3863, 2016.

\bibitem{kulesza2012determinantal}
A.~Kulesza, B.~Taskar, et~al.
\newblock Determinantal point processes for machine learning.
\newblock {\em Foundations and Trends{\textregistered} in Machine Learning},
  5(2--3):123--286, 2012.

\bibitem{soma2017non}
T.~Soma and Y.~Yoshida.
\newblock Non-monotone dr-submodular function maximization.
\newblock In {\em AAAI}, volume~17, pages 898--904, 2017.

\bibitem{Vondrak2008}
J.~Vondr\'{a}k.
\newblock Optimal approximation for the submodular welfare problem in the value
  oracle model.
\newblock In {\em ACM Symposium on Theory of Computing (STOC)}, 2008.

\end{thebibliography}

\appendix

\section{Proof of Lemma~\ref{lem:dg-inv2}}

Consider the update rule in Lemma~\ref{lem:dg-inv2}. We have
\begin{align*}
\langle \nabla f(\xt), \xtdot \rangle &= \langle \nabla f(\xt), \nabla f(\xt)^+ \rangle = \norm{\nabla f(\xt)^+}_2^2 \\
\langle \nabla f(\yt), \ytdot \rangle &= \langle \nabla f(\yt), \nabla f(\yt)^- \rangle = \norm{\nabla f(\yt)^-}_2^2
\end{align*}
Now we note that a coordinate of the projection $\pt_i$ can change for either one of the two reasons:
$\pt_i$ changes because $\xt_i$ increases, therefore $\ptdot_i=\xtdot_i\geq0$,
or $\pt_i$ changes because $\yt_i$ decreases, therefore $\ptdot_i=\ytdot_i\leq0$.
In the former case we have 
\begin{align*}
\nabla_i f(\pt) \ptdot_i & =\nabla_i f(\pt)\xtdot_i
\\
&\geq\nabla_i f(\yt)\xtdot_i =
\nabla_i f(\yt) \nabla_i f(\xt)^+
\\
&\geq
\nabla_i f(\yt)^- \nabla_i f(\xt)^+
\end{align*}
In the latter case we have
\begin{align*}
\nabla_i f(\pt)\ptdot_i &=\nabla_i f(\pt)\ytdot_i
\\
&\geq\nabla_i f(\xt)\ytdot_i = \nabla_i f(\xt) \nabla_i f(\yt)^-
\\
&\geq \nabla_i f(\xt)^{+}  \nabla_i f(\yt)^-
\end{align*}
Therefore $$\langle \nabla f(\pt), \ptdot \rangle \geq \langle \nabla f(\xt)^+, \nabla f(\yt)^- \rangle$$

Therefore we get that
\begin{align*}
&\frac{1}{2}\left(\langle \nabla f(\xt), \xtdot\rangle +
\langle \nabla f(\yt), \ytdot \rangle \right)
+\nabla \langle f(\pt), \ptdot \rangle
\\
&\geq
\frac{1}{2}\left(\norm{\nabla f(\xt)^{+}}_2^{2}
+\norm{\nabla f(\yt)^{-}}_2^{2}\right)
+\langle \nabla f(\xt)^{+}, \nabla f(\yt)^{-}\rangle \geq0
\end{align*}

\end{document}